\documentclass[11pt]{article}

\usepackage[margin=1in]{geometry}
\usepackage[T1]{fontenc}
\usepackage[utf8]{inputenc}
\usepackage{lmodern}
\usepackage{microtype}
\usepackage{amsmath, amssymb, amsthm}
\usepackage{booktabs, tabularx, longtable, array}
\usepackage{enumitem}
\usepackage{xcolor}
\usepackage[section]{placeins}
\usepackage{hyperref}
\usepackage[numbers,sort&compress]{natbib}
\usepackage{graphicx}
\usepackage{tikz}
\usepackage{pgfplots}
\usetikzlibrary{arrows.meta, fit, positioning, shapes.geometric}
\pgfplotsset{compat=1.18}

\hypersetup{
  colorlinks=true,
  linkcolor=blue!50!black,
  citecolor=blue!50!black,
  urlcolor=blue!50!black,
  pdftitle={Proof-Carrying Agent Actions},
  pdfauthor={Zexun Wang},
}

\setlist[itemize]{leftmargin=1.4em}
\setlist[enumerate]{leftmargin=1.6em}

\newtheorem{definition}{Definition}
\newtheorem{proposition}{Proposition}

\definecolor{pcaastatic}{HTML}{B6B6B6}
\definecolor{pcaascalar}{HTML}{F28C38}
\definecolor{pcaaruntime}{HTML}{2E7DB6}
\definecolor{pcaarisk}{HTML}{D9534F}
\definecolor{pcaawarn}{HTML}{E8A845}

\pgfplotsset{
  pcaa chart/.style={
    width=0.9\linewidth,
    height=0.34\textheight,
    ymajorgrids=true,
    grid style={dashed, gray!25},
    axis line style={black!45},
    tick style={black!45},
    xlabel style={font=\small},
    ylabel style={font=\small},
    tick label style={font=\scriptsize},
    title style={font=\small},
    legend style={font=\scriptsize, draw=none, fill=none},
  },
}

\title{
  Proof-Carrying Agent Actions:\\
  Model-Agnostic Runtime Governance for\\
  Heterogeneous Agent Systems
}
\author{Zexun Wang\thanks{Correspondence: \texttt{jw@nd.im}}\\Ond Holdings Inc.}
\date{May 31, 2026}

\begin{document}

\maketitle

\begin{abstract}
Agent systems now execute through runtimes with very different control points: local coding tools, framework SDKs, managed agent platforms, API gateways, and observer-only integrations. A high-risk action such as publishing data externally may therefore appear as a shell command in one runtime, a tool call in another, and a hosted session transition in a third. This heterogeneity makes it difficult to answer a basic governance question consistently: what action was authorized, under whose authority, with what approval semantics, and with what evidence after execution?

This paper presents \emph{Proof-Carrying Agent Actions} (PCAA), a runtime-neutral governance model centered on an action certificate rather than on a vendor-native session record. PCAA organizes control around five checkpoints---pre-action admissibility, action open, assumption capture, approval, and outcome closure---and binds them to a portable action envelope, runtime and approval receipts, and replay-ready proof. The model is extended in two practical ways that became necessary in deployment: the certificate is \emph{externality-aware}, carrying boundary facts such as destination visibility and account provenance, and approval is described by explicit enforceability classes rather than by a single reviewed/unreviewed bit.

We study the model through a reference implementation in a heterogeneous agent control plane and a disclosure-bounded evaluation protocol. On a protected benchmark expanded from 24 executable seeds to 96 traces across four runtime families, PCAA preserves route quality while exposing different failure modes under ablation: removing externality context degrades routing, removing approval-enforceability handling shifts review posture, and removing the integrity lane collapses proof stability without changing routing itself. The paper's contribution is therefore twofold: a systems formulation of runtime governance around certificate-bearing actions, and an implementation-grounded account of how that formulation can remain portable under runtime churn without collapsing into vendor-specific control surfaces.
\end{abstract}

\section{Introduction}

The same high-impact agent action can surface in strikingly different forms. Consider an outbound publication step: in one environment it appears as a shell command executed by a coding agent, in another as an SDK tool call, and in a third as a hosted session transition behind an API gateway. The operational question, however, does not change with the runtime. Someone still has to decide whether the action may proceed, whether it requires human review, what counts as acceptable evidence, and how the resulting decision can later be audited or challenged.

Existing systems usually make one neighboring object legible. System cards describe model-family risks. Managed runtimes expose sessions and tool traces. Guardrail frameworks expose validators and rails. Cryptographic systems expose commitments and attestations. These artifacts are useful, but they answer different questions. In particular, they do not by themselves provide a stable trust object for heterogeneous agent control: one record that says what action was admitted, under which authority, with what approval semantics, and with what replayable evidence.

This paper studies \emph{Proof-Carrying Agent Actions} (PCAA) as one answer to that problem. The central design choice is to treat the \emph{action certificate} as the primary trust-bearing object. In the reference implementation studied here, that certificate path is operationalized as a route-review-prove method:

\begin{enumerate}
  \item route the action through a stable governance vocabulary,
  \item review the action when policy or ambiguity requires human oversight,
  \item prove the result in a form that survives replay, export, and runtime change.
\end{enumerate}

The current formulation sharpens that idea into an explicit systems contract. It defines a five-checkpoint sequence, a portable action envelope, a neutral runtime contract, approval and runtime receipts, proof bundles, and a layered integrity architecture. Just as importantly, it makes the authority split explicit. External runtimes or partner governors may enforce policy close to execution, but the workspace-visible certificate remains responsible for admission, approval, replay, and proof closure.

The paper makes five contributions:

\begin{enumerate}
  \item a runtime-neutral definition of PCAA centered on certificate-bearing actions rather than runtime replacement;
  \item a formal five-checkpoint contract for governing high-value actions;
  \item a portable action-envelope model that carries authorization, accountability, workflow, evaluation, and externality context together;
  \item a runtime contract that distinguishes families, adapter modes, authority boundaries, and approval-enforceability classes while remaining honest about heterogeneous control depth;
  \item an implementation-grounded proof model that joins replay bundles, runtime-stage receipts, approval receipts, and integrity projections on one certificate path.
\end{enumerate}

The empirical question is narrower than the architectural one. We do not claim universal pre-execution control, nor do we claim that every runtime can expose the same receipt depth. Instead, we ask whether a certificate-centered control path can remain portable across changing runtime families without losing review semantics or proof closure. The answer offered here is implementation-grounded rather than purely theoretical: a heterogeneous control plane can preserve a common governance object if the runtime-specific details are projected into an explicit certificate path rather than left inside runtime-native tooling.

\section{Related Work}

\subsection{Auditable agents and accountability}

Recent work on auditable agents argues that action recoverability, lifecycle coverage, policy checkability, responsibility attribution, and evidence integrity are prerequisites for meaningful accountability \citep{nian2026auditable}. PCAA aligns closely with this direction, but narrows the systems primitive further: the trust-bearing unit should be the action certificate produced at decision time, not only the audit surface assembled after the fact.

\subsection{Managed runtimes, observability, and guardrail systems}

Current frontier systems optimize adjacent but different objects. OpenAI system cards report model-family risk and deployment mitigations \citep{openai2023gpt4, openai2026gpt5systemcard}. Anthropic Managed Agents and Azure AI Foundry Agent Service emphasize the hosted runtime harness, including sessions, tools, execution state, and enterprise hosting controls \citep{anthropic2026managedannouncement, anthropic2026managedoverview, microsoft2026foundry}. LangSmith centers traces and evaluation workflows \citep{langchain2026langsmith}. NeMo Guardrails and Guardrails AI center policy rails, validators, and structured remediation \citep{nvidia2026nemoguardrails, guardrails2025docs}. Lakera Guard centers prompt and content threat decisions \citep{lakera2026guard}. PCAA is intended to compose with these systems, but differs in what it treats as primary: action-level admissibility, replay-ready evidence, and certificate closure.

\subsection{Selective control and abstention-aware risk}

Conformal risk control and selective conformal variants show that abstention-aware decision systems can enforce calibrated risk constraints under selective deployment \citep{angelopoulos2022crc, zhou2025scrc}. PCAA does not currently implement a full conformal pipeline in the product, but its routing semantics are aligned with the same intuition: high-cost review and escalation should be allocated selectively rather than uniformly.

\subsection{Proof-carrying and verifiable artifacts}

Proof-carrying code established the general idea that an executable artifact can travel with a machine-checkable proof object \citep{necula1997pcc}. PCAA adapts that spirit to runtime action governance. The target is not low-level code safety, but action admissibility under changing runtimes. Modern verifiable credential and attestation ecosystems also matter here. Typed portable claims and attestations are foundational for moving selected trust material across organizational boundaries \citep{w3c2025vc, eas2025docs}. Likewise, zkVM receipts and finalized blockchain settlement states motivate stronger future integrity layers without requiring that every governed action already be fully externally verified \citep{risczero2026zkvm, solana2026production, solana2026verification}.

\subsection{What is actually new}

The novelty claim is intentionally architectural rather than rhetorical. PCAA does not claim to invent model safety evaluation, managed agent execution, runtime policy enforcement, traces, programmable rails, or cryptographic attestation in isolation. The proposed novelty is their recomposition around a different systems primitive: an action-level certificate that is portable across runtimes, bound to selective review semantics, replayable from stored evidence, and attachable to integrity or attestation lanes without surrendering final authority.

\section{Preliminaries and Notation}

Let $\mathcal{A}$ denote the space of raw runtime actions. A raw action may come from a local coding hook, an SDK callback, a managed runtime session, a browser operation, an API request, or a bridge translating an external governor into OSuite vocabulary.

The implementation studied here is best modeled through a portable action envelope rather than a single opaque score:

\begin{definition}[PCAA Certificate]
For a raw action $a \in \mathcal{A}$, let
\[
\mathcal{C}(a) = \bigl(E(a), Q_r(a), Q_h(a), P(a), I(a)\bigr),
\]
where:
\begin{itemize}
  \item $E(a)$ is the normalized action envelope,
  \item $Q_r(a)$ is the set of runtime-stage receipts,
  \item $Q_h(a)$ is the set of approval receipts and accountability facts,
  \item $P(a)$ is the replay and proof state attached to the action,
  \item $I(a)$ is the integrity projection, including commitments, attestations, or future verification references.
\end{itemize}
\end{definition}

This formulation emphasizes canonicalization, risk, uncertainty, admissibility, proof-graph summary, and commitment material through the action-envelope and proof-bundle contracts rather than treating them as disconnected abstract objects.

\begin{table}[t]
\centering
\caption{Core notation used in the paper.}
\begin{tabularx}{\linewidth}{@{}>{\raggedright\arraybackslash}p{0.18\linewidth}X@{}}
\toprule
Symbol & Meaning \\
\midrule
$a \in \mathcal{A}$ & Raw runtime action emitted by a runtime, bridge, or adapter \\
$E(a)$ & Portable action envelope attached to the action \\
$Q_r(a)$ & Runtime-stage receipts such as pre-input or pre-tool decisions \\
$Q_h(a)$ & Human approval and accountability receipts \\
$P(a)$ & Replay-ready proof state, including summaries and bundle projections \\
$I(a)$ & Integrity references such as commitments, attestations, or future receipts \\
$F(a)$ & Final authority designation, expected to resolve to PCAA at closure \\
$B_t$ & Available review or audit budget at time $t$ \\
$\pi(a)$ & Admissibility route selected for the action \\
\bottomrule
\end{tabularx}
\end{table}

In the reference implementation, the action envelope contains at least the following classes of fields:

\begin{itemize}
  \item identity and tenant scope,
  \item runtime-family and adapter-mode declarations,
  \item authorization context,
  \item workflow context,
  \item evaluation context,
  \item accountability context,
  \item optional runtime, partner, and integrity projections.
\end{itemize}

The key shift is that PCAA is not expressed as a hidden policy function. It is expressed as a stable runtime governance object whose fields are inspectable, replayable, and exportable.

\section{Problem Setting}

\subsection{Action governance under runtime heterogeneity}

Depending on the runtime family, a raw action may appear as:

\begin{itemize}
  \item a tool-use invocation in a coding agent,
  \item an SDK method call inside a framework runtime,
  \item a workflow transition in a managed platform,
  \item a gateway request crossing an API boundary,
  \item a browser-native operation,
  \item a signed request, wallet intent, or protocol-lane message.
\end{itemize}

The operator-facing problem is therefore not simply whether an action is ``safe.'' The actual control problem is:

\begin{quote}
Given a proposed action $a$, determine whether it should be allowed immediately, routed through additional evidence checks, held for approval, denied, deferred under uncertainty, or recorded as observe-only under partial governance coverage.
\end{quote}

\subsection{Desired properties}

The model studied here requires the following non-negotiable properties:

\begin{enumerate}
  \item \textbf{Runtime neutrality.} Governance cannot collapse into one runtime vendor or session schema.
  \item \textbf{Explicit authority closure.} Final authority must remain visible and tenant-readable.
  \item \textbf{Portable review semantics.} Approval cannot disappear into runtime-specific product states.
  \item \textbf{Replay and exportability.} The resulting trust object must survive replay, proof export, and buyer-facing review.
  \item \textbf{Honest coverage disclosure.} The system must distinguish strong interception from bridge or observe-only modes.
  \item \textbf{Extensible integrity.} Fast commitments, portable attestations, and future verification lanes must compose without redefining the base control object.
\end{enumerate}

\subsection{Research objective}

Abstractly, PCAA seeks a control mechanism $M$ that minimizes severe missed escalations while respecting review budgets and runtime-boundary constraints:
\[
\min_M \; \mathbb{E}[\ell_{\mathrm{miss}}(a, M)]
\quad \text{s.t.} \quad
\mathbb{E}[c(a)\,\mathbf{1}\{\mathrm{review}(a)\}] \leq B_t.
\]

The important point is not the specific loss function. It is the shift in objective. PCAA does not treat average scoring quality as the sole endpoint. It treats selective routing, explicit review semantics, and certificate closure as the actual runtime trust problem.

\section{Proof-Carrying Agent Actions}

\subsection{Boundary}

PCAA is not a replacement runtime. It is the authority layer that closes admission, approval, and proof. This boundary is central to the formulation studied here. External runtimes, managed execution layers, or partner systems may contribute receipts, trust signals, or enforcement close to execution, but they do not replace the tenant-visible certificate path.

\subsection{Route, review, prove}

The current operational formulation can be summarized in one line:

\begin{enumerate}
  \item \textbf{Route}: normalize the action and determine the posture it requires.
  \item \textbf{Review}: escalate when policy, ambiguity, or boundary semantics require human oversight.
  \item \textbf{Prove}: close the action into replayable evidence and exportable proof.
\end{enumerate}

\subsection{Externality-aware admissibility}

The reference implementation sharpens the original model in one important way: admissibility now depends not only on \emph{what} action is proposed, but also on \emph{what boundary it crosses}. Let $a$ denote a proposed governed action and let
\[
X(a) = \bigl(d, v, o, t, p, c, w, s, r, \epsilon\bigr)
\]
denote its externality context, where $d$ is destination type, $v$ is destination visibility, $o$ is destination ownership, $t$ is destination-tenant provenance, $p$ is account provenance, $c$ is client provenance, $w$ records whether the action is an external write, $s$ records sensitive-payload posture, $r$ captures reversibility or destructive posture, and $\epsilon$ captures approval enforceability.

The route step therefore evaluates a pair $(a, X(a))$ rather than an isolated action label. This makes the admission problem more honest for enterprise use. A low-level tool invocation and a high-level business side effect are not equivalent once public destinations, personal accounts, destructive writes, or observe-only runtimes are involved.

\subsection{Five-checkpoint contract}

We treat the five-checkpoint contract as the canonical operational form of PCAA.

\begin{table}[t]
\centering
\small
\caption{Five-checkpoint PCAA contract.}
\begin{tabularx}{\linewidth}{@{}>{\raggedright\arraybackslash}p{0.21\linewidth}>{\raggedright\arraybackslash}p{0.18\linewidth}X@{}}
\toprule
Checkpoint & SDK method & Purpose \\
\midrule
Pre-action admissibility & \texttt{guard} & Evaluate whether the action is allowed, warned, blocked, simulated first, or approval-gated before side effects occur. \\
Action open & \texttt{createAction} & Create the stable trust object that the rest of the certificate path attaches to. \\
Assumption capture & \texttt{recordAssumption} & Preserve what the runtime believed, expected, or depended on at decision time. \\
Approval checkpoint & \texttt{waitForApproval} & Pause or externally hold execution when policy requires human review. \\
Outcome closure & \texttt{updateOutcome} & Write final result, evidence, and closure status for replay and proof. \\
\bottomrule
\end{tabularx}
\end{table}

\begin{figure*}[t]
\centering
\begin{tikzpicture}[
  node distance=0.8cm and 0.65cm,
  box/.style={draw, rounded corners=3pt, align=center, minimum height=1.2cm, minimum width=2.35cm, text width=2.35cm, fill=blue!4},
  receipt/.style={box, fill=green!5, minimum width=2.8cm, text width=2.8cm, font=\small},
  lane/.style={draw, rounded corners=4pt, inner sep=8pt},
  arrow/.style={-{Latex[length=3mm]}, thick}
]
\node[box] (route) {Pre-action\\admissibility};
\node[box, right=of route] (open) {Action\\open};
\node[box, right=of open] (assume) {Assumption\\capture};
\node[box, right=of assume] (approve) {Approval\\checkpoint};
\node[box, right=of approve] (close) {Outcome\\closure};

\draw[arrow] (route) -- (open);
\draw[arrow] (open) -- (assume);
\draw[arrow] (assume) -- (approve);
\draw[arrow] (approve) -- (close);

\node[lane, fit=(route)(open)(assume)(approve)(close), label={[fill=white, inner sep=1pt]above:{\textbf{PCAA checkpoint path}}}] {};

\node[below=1.2cm of open, receipt] (runtime) {Runtime receipts\\and tool provenance};
\node[below=1.2cm of approve, receipt] (human) {Approval receipts\\and accountability};
\node[below=1.2cm of close, receipt] (proof) {Replay bundle, proof\\export, integrity lanes};

\draw[arrow] (runtime.north) -- (open.south);
\draw[arrow] (human.north) -- (approve.south);
\draw[arrow] (proof.north) -- (close.south);
\end{tikzpicture}
\caption{PCAA model: checkpoint flow plus attached receipts and proof closure.}
\label{fig:pcaa-overview}
\end{figure*}

\subsection{Core claim}

The central claim is therefore stronger than ``record runtime logs.'' High-value actions should not be admitted solely because the underlying platform allows them. They should be admitted only when a certificate path exists that can:

\begin{itemize}
  \item name the runtime boundary,
  \item preserve the governing route,
  \item preserve review semantics where required,
  \item disclose the externality boundary of the action itself,
  \item close into replayable and exportable proof,
  \item keep final authority explicit.
\end{itemize}

\section{Neutral Runtime Contract}

\subsection{Why neutrality matters}

Agent runtimes keep changing. Governance cannot restart every time the runtime changes. A natural response is to treat runtime integration as an adapter problem rather than a one-off product compatibility story. PCAA should show up as a portable execution sequence rather than a runtime-specific abstraction.

\begin{definition}[Neutral runtime contract]
For a runtime integration $r$, define the neutral contract
\[
\mathcal{N}(r) = \bigl(f_r, m_r, b_r, \alpha_r, s_r, \sigma_r\bigr),
\]
where $f_r$ is the runtime family, $m_r$ is the adapter mode, $b_r$ is the effective execution boundary, $\alpha_r$ is the declared authority split, $s_r$ is the session projection capability, and $\sigma_r$ is the receipt projection capability. Two runtimes are governance-comparable when they differ in vendor or product shape but can still be projected into the same neutral contract vocabulary.
\end{definition}

This definition is operational rather than rhetorical. The reference implementation exposes runtime families, adapter modes, authority split, adapter contract requirements, and checkpoint fields as explicit schema material instead of leaving them as prose-only documentation.

\subsection{Runtime families}

The neutral runtime contract groups runtimes into a small set of portable families.

\begin{table}[t]
\centering
\caption{Neutral runtime families used in the reference implementation.}
\begin{tabularx}{\linewidth}{@{}>{\raggedright\arraybackslash}p{0.2\linewidth}X@{}}
\toprule
Family & Operational interpretation \\
\midrule
OpenAI-compatible gateway & Request or tool-call governance at an API-shaped boundary. \\
Managed agent platform & Hosted agent systems that own sessions, tool execution, and state progression. \\
Framework SDK runtime & Code-first runtimes where governance can attach directly to tools, graph nodes, or workflow edges. \\
Tool-hook runtime & Developer-facing runtimes with commands, shell hooks, or pre-execution tool interception. \\
Observer or import-only runtime & Systems where evidence can be imported after execution but inline preemption is weak or absent. \\
\bottomrule
\end{tabularx}
\end{table}

\subsection{Adapter modes}

Runtime families alone are insufficient. The implementation also distinguishes how governance attaches.

\begin{table}[t]
\centering
\caption{Adapter modes and their disclosed interception posture.}
\begin{tabularx}{\linewidth}{@{}>{\raggedright\arraybackslash}p{0.18\linewidth}>{\raggedright\arraybackslash}p{0.16\linewidth}X@{}}
\toprule
Mode & Interception depth & Practical meaning \\
\midrule
Inline SDK & Strong & Governance code is embedded directly inside the runtime path and can participate in guard, action open, approval, and closure. \\
Lifecycle hook & Strong & The runtime emits pre/post execution hooks that let OSuite intervene close to tool use without replacing the runtime core. \\
Session adapter & Moderate & Structured session and tool events are forwarded through a local adapter with partial but meaningful control. \\
Gateway & Moderate & Governance attaches at an API or protocol boundary before the runtime proceeds. \\
Bridge & Selective & A managed platform forwards structured events and receipts, but some execution semantics remain owned by the remote runtime. \\
Observer & Weak & OSuite receives receipts or outcomes after execution and cannot claim deterministic inline interception. \\
\bottomrule
\end{tabularx}
\end{table}

This distinction matters because governance honesty depends on saying where interception is actually strong, moderate, selective, or weak.

\subsection{Boundary and coverage honesty}

The neutral contract is useful only if the execution boundary is named precisely. In the reference implementation, this is captured by the execution-boundary kind and the session-projection requirement. A runtime that emits only post-hoc outcomes is not equivalent to a runtime that can stop a tool call before side effects.

Let $\Gamma(r)$ denote the governance coverage posture of runtime $r$. The paper treats $\Gamma(r)$ as a disclosed qualitative property, not a hidden marketing abstraction. At minimum, it should encode:
\begin{itemize}
  \item whether pre-action admissibility is inline or post-hoc,
  \item whether approvals can actually pause execution,
  \item whether session and tool identities survive replay,
  \item whether runtime-stage receipts are complete, partial, or absent.
\end{itemize}

This is why the neutral contract uses both a family and a mode. A managed platform and a gateway may both look externally hosted, but their replay and intervention guarantees differ materially.

\subsection{Authority split}

The reference implementation makes a three-way authority split explicit:

\begin{enumerate}
  \item \textbf{Workspace authority}: OSuite remains authoritative for tenant scope, approvals, replay exports, and proof closure.
  \item \textbf{Runtime authority}: the runtime remains authoritative for actual tool execution, local mutation, and native trace generation.
  \item \textbf{External governor authority}: systems such as Trust Boundary may add pre-execution policy enforcement or runtime safety controls close to execution.
\end{enumerate}

This split is important because it prevents an all-or-nothing framing. A system may have strong runtime governance and still not replace workspace-visible certificate closure.

\begin{proposition}[Authority closure remains tenant-comparable]
Suppose two runtimes $r_1$ and $r_2$ project governed actions into the same checkpoint contract and preserve final authority as a PCAA-visible closure field. Then replay and buyer-visible review can remain tenant-comparable even when execution semantics differ between $r_1$ and $r_2$.
\end{proposition}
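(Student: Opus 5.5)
The plan is to make ``tenant-comparable'' precise and then show that comparability depends only on data the hypotheses preserve. I will call replay and buyer-visible review \emph{tenant-comparable} across $r_1$ and $r_2$ when there is a single review map $R$ that serves both runtimes. Concretely, $R$ should take a certificate $\mathcal{C}(a)$ to a tenant-readable verdict: the route $\pi(a)$, the approval status recorded in $Q_h(a)$, the closure status in $P(a)$, and the final authority $F(a)$. It must also be defined on certificates from either runtime without consulting runtime-native state. The argument is then a factorization. I will show that $R$ can be written as $R = \hat R \circ \kappa$, where $\kappa$ projects a certificate onto its checkpoint-contract fields (the five checkpoints of the contract table) together with the closure field $F(a)$. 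The function $\hat R$ is defined on that common vocabulary alone.

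First, I will use the hypothesis that both runtimes project into the same checkpoint contract. This means the images $\kappa(\mathcal{C}(a_1))$ and $\kappa(\mathcal{C}(a_2))$ lie in one shared codomain, namely a tuple recording the guard outcome, the action-open identifier, the captured assumptions, the approval receipt, and the outcome closure. Second, I will invoke the neutral runtime contract $\mathcal{N}(r)$ from its definition. The quantities on which execution semantics differ (the boundary $b_r$, the mode $m_r$, and the receipt depth $\sigma_r$) enter the certificate only as \emph{disclosed} fields or as coverage posture $\Gamma(r)$. They therefore appear inside the common vocabulary as values rather than changing its shape. Third, I will use the hypothesis that final authority is a PCAA-visible closure field to show that $F(a)$ is read off $\kappa(\mathcal{C}(a))$ rather than from runtime authority. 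This is where the three-way authority split matters: runtime and external-governor authority contribute receipts, but workspace authority owns closure. Replay, defined as re-evaluating $\hat R$ on stored $P(a)$, is then a function of $\kappa$-data only. That gives the factorization, and with it comparability.

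The main obstacle is that the statement is informal. ``Execution semantics differ'' could include differences that leak into the checkpoint fields themselves, for example an observer-mode runtime whose approval checkpoint cannot actually pause execution. I would handle this by scoping what the proposition claims. It asserts comparability of the \emph{review object}, not equivalence of the underlying guarantees. Weaker interception is not hidden: it is recorded in $\Gamma(r)$ and the adapter-mode field, so two certificates with different coverage yield comparable but non-identical verdicts under $\hat R$. If this interpretation is accepted, the remaining steps are essentially definitional. I therefore expect the argument to read as a structural justification under stated modeling assumptions rather than a deep theorem.
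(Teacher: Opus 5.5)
Your proposal is correct and is essentially the paper's own argument made formal: the paper's informal justification says that the decision route, approval state, outcome closure, and authority designation survive in a common structure, while runtime-specific receipts add detail without redefining the control object, and your factorization $R = \hat R \circ \kappa$ through the shared checkpoint-plus-authority vocabulary is a precise rendering of that. One small tightening: your final paragraph wants $\hat R$ to return coverage-sensitive verdicts, so $\kappa$ should also retain the disclosed coverage fields (the adapter mode $m_r$ and the enforceability class $\epsilon(a)$); this is harmless, because those fields belong to the same runtime-neutral vocabulary.
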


\noindent\textit{Informal justification.} The action certificate does not require identical execution engines. It requires that the decision route, approval state, outcome closure, and authority designation survive in a common structure. Runtime-specific receipts add detail, but they do not redefine the final control object.

\subsection{Approval enforceability classes}

The implementation adds a practical distinction that earlier formulations did not name sharply enough. Governance surfaces should say not only that an approval existed, but \emph{how enforceable} that approval was with respect to side effects. Let
\[
\epsilon(a) \in
\left\{
\begin{array}{l}
\texttt{pre\_execution\_gate}, \\
\texttt{observe\_only}, \\
\texttt{delegated\_runtime\_control}, \\
\texttt{runtime\_controlled}
\end{array}
\right\}
\]
be the approval-enforceability class for action $a$.

\begin{itemize}
  \item \texttt{pre\_execution\_gate}: OSuite or an attached governor can pause before side effects.
  \item \texttt{observe\_only}: evidence is imported after execution or after an agent abort, so OSuite cannot claim deterministic preemption.
  \item \texttt{delegated\_runtime\_control}: another runtime-layer governor enforced the decision, while PCAA preserved the tenant-visible certificate path.
  \item \texttt{runtime\_controlled}: execution completed under runtime control without a hard pre-execution review gate.
\end{itemize}

This classification is important because it lets replay and buyer review distinguish true fail-closed authority from post-hoc visibility.

\subsection{External governor projection}

The implementation reserves a first-class projection for Microsoft Agent Governance Toolkit (AGT). The important point is not brand inclusion. The important point is what AGT contributes:

\begin{itemize}
  \item runtime-stage receipts,
  \item approval receipts,
  \item release-compatibility surface,
  \item trust-boundary policy stages such as pre-input, pre-tool, post-tool, and pre-output.
\end{itemize}

PCAA absorbs those receipts without surrendering final authority. This is the clean integration line for managed enterprise runtimes: let the runtime or governor enforce close to execution, but let PCAA remain the tenant-visible route-review-prove layer.

\subsection{Schema surface}

The contract is not only conceptual. The reference implementation exposes a machine-readable runtime-governance interface that declares schema version, runtime families, adapter modes, authority split, action-envelope fields, checkpoint contract, plugin scope, and protocol lanes. That interface also accommodates action-boundary disclosure such as \texttt{externality\_context}. This matters because the operational specification can be checked against an executable surface rather than left as prose alone.

\section{Portable Action Envelope and Accountability}

\subsection{Action envelope as the portable trust object}

The reference implementation no longer treats the certificate only as an abstract tuple. It materializes it as a portable action envelope with a versioned schema. This envelope is the object that survives runtime churn.

At minimum, the envelope can declare:

\begin{itemize}
  \item action identity and declared goal,
  \item tenant and agent scope,
  \item runtime family and adapter mode,
  \item execution boundary kind,
  \item governance authority,
  \item authorization context,
  \item workflow context,
  \item evaluation context,
  \item externality context,
  \item accountability context,
  \item optional runtime identity, trace context, partner inputs, or protocol-lane declarations.
\end{itemize}

\subsection{Authorization, workflow, and evaluation context}

The implementation attaches more than action type and status. The authorization context names allowed systems, allowed actions, data domains, escalation mode, and whether human review is required. The workflow context preserves workflow identity, stage, business process, and connected systems. The evaluation context preserves feedback-loop and recommendation linkage.

These additions matter because they move PCAA away from being just a policy verdict store. The action becomes interpretable in the language an operator, buyer, or reviewer actually needs: what was attempted, under which workflow, against which systems, and whether evaluation or review was expected.

\subsection{Accountability context}

Named accountability is one of the clearest upgrades in the implementation studied here. The accountability chain includes:

\begin{itemize}
  \item requested-by identity,
  \item policy owner,
  \item runtime owner,
  \item final authority,
  \item approver lineage when approval receipts exist.
\end{itemize}

This is operationally significant. Many adjacent systems can produce traces, checks, or logs, but fewer preserve a tenant-readable answer to the question ``who was allowed to do what under which authority?''

\subsection{Externality context}

An especially important addition is the externality context. This object makes the certificate legible in the language enterprise reviewers actually use when assessing an agent action. The reference implementation can carry:

\begin{itemize}
  \item destination type and destination host,
  \item destination visibility and destination ownership,
  \item destination-tenant provenance,
  \item account provenance (\texttt{company}, \texttt{personal}, or \texttt{unknown}),
  \item client provenance,
  \item whether the action is an external write,
  \item whether sensitive payload is involved,
  \item reversibility and destructive posture,
  \item approval enforceability.
\end{itemize}

This matters because actor identity alone is not enough. In enterprise review, the dangerous question is often ``whose account, which client, and what external destination did the action actually use?''

\subsection{Runtime-stage receipts and approval receipts}

The implementation formalizes runtime-stage receipts over an ordered stage set:
\[
\texttt{pre\_input}, \quad \texttt{pre\_tool}, \quad \texttt{post\_tool}, \quad \texttt{pre\_output}.
\]

These receipts are intentionally orthogonal to PCAA closure. They preserve what an external runtime governor observed or enforced. Approval receipts do something analogous for human checkpoints: they preserve actor, decision, workflow identity, timing, and source.

The certificate path is stronger when these receipts are present, but the critical design rule is that transport or enforcement receipts do not replace final certificate closure. They feed the envelope and proof path; they do not redefine authority.

\subsection{Assumptions and decision basis}

The five-checkpoint model includes assumption capture explicitly. This is not cosmetic. A replayable action record should preserve what the runtime believed or relied on at decision time. Without that basis, later replay degenerates into a historical status report rather than a meaningful governance artifact.

\section{Proof, Replay, and Layered Integrity}

\subsection{Proof bundle closure}

The current proof model projects governed actions into replay and proof objects rather than leaving them as dashboard-only state. At the action level, the proof path can summarize:

\begin{itemize}
  \item decision outcome,
  \item policy and tool provenance,
  \item runtime profile,
  \item workload identity and trace context,
  \item authorization boundary,
  \item externality boundary,
  \item workflow accountability,
  \item evaluation closure,
  \item partner capability inputs,
  \item optional integrity or attestation state.
\end{itemize}

The important practical consequence is replayability. An operator or reviewer should be able to inspect what happened without depending on one vendor's live runtime session view.

\begin{definition}[Proof bundle projection]
For a governed action $a$, let
\[
\mathcal{P}(a) = \bigl(S(a), E(a), V(a), M(a)\bigr),
\]
where $S(a)$ is the human-readable summary, $E(a)$ is the evidence payload carried forward from the action certificate, $V(a)$ is the projected verification state, and $M(a)$ is the manifest material derived from canonicalized bundle content. In the current implementation, $M(a)$ is attached after canonicalization of subject, summary, evidence, and verification fields.
\end{definition}

In the reference implementation, the proof object is no longer only a conceptual endpoint. It is a structured bundle with schema version, generation timestamp, summary, evidence, projected verification checks, and manifest digest material.

\subsection{Proof graph intuition}

Earlier formulations framed a proof graph and commitment tuple more mathematically. In the reference implementation, that intuition is realized through structured summaries, receipts, manifests, and verification projections. The proof graph is therefore best understood as the set of preserved decision inputs and closure artifacts attached to the action:

\begin{itemize}
  \item route decision,
  \item runtime-stage receipts,
  \item approval receipts,
  \item assumptions and evidence references,
  \item accountability facts,
  \item manifest and verification status.
\end{itemize}

\subsection{Replay determinism and drift checks}

The replay layer in the reference implementation is stronger than a passive export. It recomputes action envelope, risk result, policy snapshot, and manifest payload, then checks for drift. Four failure classes are especially important:

\begin{itemize}
  \item missing evidence required for replay,
  \item manifest mismatch against the canonical replay payload,
  \item policy snapshot drift,
  \item materially drifted risk result.
\end{itemize}

This means replay is not merely ``load old JSON and display it.'' It is an active comparison between stored governance state and recomputed governance state under the current replay contract.

\subsection{Non-execution proof and boundary attestations}

The current implementation also sharpens what counts as successful proof closure. A governed action does not need to execute in order to produce a meaningful proof object. The proof path now distinguishes:

\begin{itemize}
  \item \texttt{execution\_proof},
  \item \texttt{pending\_execution\_proof},
  \item \texttt{partial\_execution\_proof},
  \item \texttt{non\_execution\_proof}.
\end{itemize}

This is practically important. Blocked actions, approval-paused actions, and agent-aborted actions should still close into replayable evidence rather than disappear as missing execution logs. The proof attestation set also now includes an explicit externality-boundary projection so the exported artifact can state destination visibility, account provenance, client provenance, and approval enforceability together.

\subsection{Receipt completeness}

Let $R_{\mathrm{exp}}(a)$ denote the expected runtime stages for action $a$ and let $R_{\mathrm{obs}}(a)$ denote the observed runtime-stage receipts after normalization. Receipt completeness can be summarized as
\[
\rho(a) =
\begin{cases}
\frac{|R_{\mathrm{exp}}(a) \cap R_{\mathrm{obs}}(a)|}{|R_{\mathrm{exp}}(a)|}, & |R_{\mathrm{exp}}(a)| > 0, \\
1, & |R_{\mathrm{exp}}(a)| = 0 \text{ and receipts are present}, \\
0, & \text{otherwise.}
\end{cases}
\]

This formulation matters because it preserves an honest distinction between complete, partial, and missing receipt coverage. In heterogeneous runtimes, receipt completeness is not a cosmetic metric. It is part of what determines how much trust the replay path can claim.

\begin{table}[t]
\centering
\caption{Proof-path checks that support replay and verification.}
\begin{tabularx}{\linewidth}{@{}>{\raggedright\arraybackslash}p{0.22\linewidth}X@{}}
\toprule
Check & Operational consequence \\
\midrule
Manifest digest present & Allows later fingerprint comparison against a canonicalized proof payload. \\
Projected verification attached & Preserves a verifier-facing summary without exposing all internal payloads. \\
Approval summary complete & Makes review semantics portable across runtimes and workflows. \\
Externality boundary declared & Makes destination, account, client, and enforceability posture portable in the proof export. \\
Receipt coverage computed & Discloses whether runtime enforcement evidence is complete, partial, or absent. \\
Outcome closure recorded & Ensures the certificate closes to a stable terminal state rather than a hanging action log. \\
\bottomrule
\end{tabularx}
\end{table}

\subsection{Dual-lane and future-lane integrity}

The layered integrity architecture remains useful, and the current implementation clarifies its role:

\begin{enumerate}
  \item \textbf{Fast commitment lane}: lightweight commitments or attestations that make silent drift harder.
  \item \textbf{Portable trust lane}: typed trust materials, attestations, or credentials that can travel across systems.
  \item \textbf{Future verification lane}: stronger receipts or verifiable computation artifacts that may mature later.
\end{enumerate}

This separation is important for technical honesty. PCAA does not need every action to be publicly or cryptographically verified to remain meaningful. The stronger near-term claim is that certificate closure, replay, and integrity references are preserved in a stable shape that can later project into stronger verification ecosystems.

\subsection{Why chain integration is not ornamental}

Chain-backed or receipt-bearing integrity is only useful if it strengthens portability, challengeability, or anti-drift guarantees. The current design explicitly rejects the idea that a chain receipt, transport receipt, or wallet attestation could replace PCAA closure. Such artifacts are adjacent trust materials, not the authority layer itself.

\subsection{Buyer-visible consequence}

This proof model matters commercially as well as technically. Buyers and security reviewers do not only ask whether an action was blocked. They ask whether approvals can be shown, whether receipts survive export, whether runtime coverage is disclosed honestly, and whether evidence can be replayed after the runtime stack changes. The proof path is what lets PCAA answer those questions without collapsing back into product marketing.

\section{Implementation Snapshot}

\subsection{Reference executable surface}

The reference implementation exposes the PCAA loop as a concrete SDK and API surface. The core runtime flow is:

\begin{enumerate}
  \item \texttt{guard(context)}
  \item \texttt{createAction(action)}
  \item \texttt{recordAssumption(assumption)}
  \item \texttt{waitForApproval(id)} when required
  \item \texttt{updateOutcome(id, outcome)}
\end{enumerate}

In practice, this means the certificate path is no longer merely conceptual. There is a concrete object lifecycle that runtimes and operators can invoke.

\subsection{Operational artifact surface}

The implementation also exposes adjacent artifact surfaces that matter for the operational interpretation of the model:

\begin{itemize}
  \item governance proof snapshots and proof bundles,
  \item action trace and replay state,
  \item PCAA health summaries,
  \item runtime inventory and adapter catalogs,
  \item buyer assurance and control-matrix outputs,
  \item connector-posture and enterprise-readiness packets.
\end{itemize}

These surfaces matter because they show that PCAA is now split across at least three layers:

\begin{enumerate}
  \item the action-level contract,
  \item the runtime-boundary contract,
  \item the buyer-visible assurance and export layer.
\end{enumerate}

\begin{table}[t]
\centering
\caption{Executable surfaces that anchor the implementation snapshot.}
\begin{tabularx}{\linewidth}{@{}>{\raggedright\arraybackslash}p{0.22\linewidth}X@{}}
\toprule
Surface & What it contributes \\
\midrule
SDK loop & The concrete checkpoint lifecycle used by connected runtimes and sample integrations. \\
Runtime-governance contract API & A machine-readable schema for runtime families, adapter modes, checkpoints, action-envelope fields, plugin scope, and protocol lanes. \\
Replay and proof bundle services & Exportable proof objects, manifest digests, verification projections, and drift-aware replay. \\
Connect guide templates & Minimal smoke, support triage, and release-gate examples that show how the abstract contract is expected to be used in practice. \\
Activation and readiness summaries & A buyer-ready champion path that begins with connection and ends with proof and verification. \\
\bottomrule
\end{tabularx}
\end{table}

\subsection{Coverage disclosure}

One of the strongest implementation choices is explicit coverage disclosure. The system distinguishes inline SDK, hook, gateway, bridge, session, and observer modes instead of pretending all runtimes provide equal governance depth. This is operationally preferable to treating imported evidence as indistinguishable from deterministic inline interception.

\subsection{Boundary-aware enterprise deployment}

Recent implementation work does not replace the governance core. Instead, it sharpens the operational vocabulary around the existing core. In particular, the system now exposes direct executable anchors for:

\begin{itemize}
  \item externality-aware action envelopes,
  \item approval-enforceability summaries,
  \item connector posture as a first-class assurance packet,
  \item company-versus-personal account provenance,
  \item public-facing readiness and disclosure surfaces.
\end{itemize}

This matters because the enterprise-facing interpretation of PCAA is no longer implicit in dashboard prose alone. It is expressed through stable surfaces that can be checked against the underlying governance contract.

\subsection{Reference integration patterns}

The implementation also illustrates concrete integration patterns rather than only hypothetical deployments. Representative examples include a smoke-test agent, a support-triage agent, and a release-gate agent. These examples are valuable because they demonstrate three distinct uses of the checkpoint contract:

\begin{itemize}
  \item low-risk connection validation,
  \item workflow routing with assumptions and metadata,
  \item approval-gated execution for production-side changes.
\end{itemize}

This spread is analytically useful because it shows that the same checkpoint contract can support low-risk validation, workflow routing, and approval-gated production actions.

\subsection{Partner and extension boundary}

The implementation also makes partner and extension boundaries explicit. AGT, AgentGuard, AIM, Attestix, Microsoft enterprise posture inputs, and optional Web3 lanes can contribute trust or evidence, but they do so under a strict rule: PCAA remains final authority. This keeps composability from degenerating into an incoherent control model.

\subsection{Champion scenario}

The reference implementation is also unusually explicit about what a successful activation looks like. A single ``buyer-ready governed proof'' scenario is modeled as a five-step path: create workspace authority, issue runtime credentials, submit a first governed action, export a proof bundle, and verify a fingerprint. This is useful because it anchors abstract governance claims in an observable readiness sequence.

\subsection{Current limitations}

Several limitations remain visible from the implementation and should remain visible in the paper:

\begin{itemize}
  \item runtime depth is heterogeneous and sometimes only partially enforced,
  \item stronger verifier-network or public-scale cryptographic verification is still future-facing,
  \item benchmark-quality public evaluation for the full certificate path is still limited,
  \item some empirical claims should be treated as historical baseline results rather than statements about the current system state.
\end{itemize}

For this reason, the paper should be read as an implementation-informed systems account, not as a claim that every quantitative result remains invariant under later system revisions.

\section{Public Validation Protocol}

\subsection{Protected corpus, public aggregates}

The evaluation in this paper is public in method but not in raw benchmark contents. We report aggregate metrics derived from a protected executable corpus and withhold the full scenario list, exact policy thresholds, scalar-heuristic weights, and connector-specific routing rules. The reason is practical rather than cosmetic: the manuscript is intended to support public scrutiny of the control method without turning the benchmark into an implementation playbook.

The protected corpus expands 24 executable seed templates into 96 traces by replaying the same governed action families across four runtime families: framework SDK, OpenAI-compatible gateway, managed agent platform, and observer or import-only runtime. This aggregate profile is large enough to show route quality, review burden, replay behavior, and ablation sensitivity, even though it does not expose each underlying scenario.

\begin{table}[!htbp]
\centering
\caption{Disclosure-safe corpus profile used for public validation.}
{\small
\begin{tabularx}{\linewidth}{@{}>{\raggedright\arraybackslash}p{0.42\linewidth}>{\centering\arraybackslash}p{0.18\linewidth}X@{}}
\toprule
Dimension & Count & Public interpretation \\
\midrule
Seed templates & 24 & Protected internal benchmark seeds spanning four control outcomes \\
Executable traces & 96 & Seed templates replayed across four runtime families \\
Runtime families & 4 & Framework SDK, gateway, managed platform, and observer or import-only modes \\
Decision buckets & 24 / 20 / 28 / 24 & Aggregate counts for allow, simulate-first, require-approval, and block \\
Boundary classes & 6 & Internal read, simulation-guarded mutation, trust-degraded high impact, external egress, prompt abuse, destructive denied path \\
\bottomrule
\end{tabularx}
}
\end{table}

The protocol is designed to answer a limited but important public question: can a runtime-neutral governance layer preserve routing and proof behavior across heterogeneous runtimes, and can the role of key PCAA components be made visible through ablation?

\subsection{Route quality}

Table~\ref{tab:pcaa-public-route-quality} compares PCAA with two weaker baselines: static rules and a scalar heuristic. On this benchmark, PCAA attains perfect route fidelity, while the baselines lose substantial route resolution. The more important observation is comparative rather than absolute. The weaker methods either over-block or collapse distinct cases into a thresholded verdict, whereas the certificate-bearing path preserves the benchmark's decision taxonomy.

That result should be interpreted carefully. A perfect score on a protected benchmark does not mean that runtime governance is a solved problem, nor does it imply uniform performance under future deployment drift. It means that, for this benchmark family, the implemented route semantics align closely with the decision categories the benchmark is designed to exercise.

\begin{table}[!htbp]
\centering
\caption{Aggregate route quality on the protected public-validation corpus.}
\label{tab:pcaa-public-route-quality}
{\small
\begin{tabularx}{\linewidth}{@{}>{\raggedright\arraybackslash}p{0.28\linewidth}*{4}{>{\centering\arraybackslash}p{0.15\linewidth}}@{}}
\toprule
Method & Exact accuracy & Macro-F1 & Severe recall & Block precision \\
\midrule
Static rules & 0.583 & 0.493 & 1.000 & 0.400 \\
Scalar heuristic & 0.375 & 0.296 & 0.778 & 0.182 \\
PCAA runtime & \textbf{1.000} & \textbf{1.000} & \textbf{1.000} & \textbf{1.000} \\
\bottomrule
\end{tabularx}
}
\end{table}

\begin{figure}[!htbp]
\centering
\begin{tikzpicture}
\begin{axis}[
  pcaa chart,
  title={Aggregate Route Quality on Protected Public Validation},
  ylabel={Score},
  ymin=0,
  ymax=1.05,
  ybar,
  bar width=10pt,
  symbolic x coords={Exact Accuracy,Macro-F1,Severe Recall,Block Precision},
  xtick=data,
  enlarge x limits=0.15,
  legend columns=3,
  legend style={at={(0.5,1.12)}, anchor=south},
]
\addplot[fill=pcaastatic, draw=pcaastatic] coordinates {
  (Exact Accuracy,0.583)
  (Macro-F1,0.493)
  (Severe Recall,1.000)
  (Block Precision,0.400)
};
\addplot[fill=pcaascalar, draw=pcaascalar] coordinates {
  (Exact Accuracy,0.375)
  (Macro-F1,0.296)
  (Severe Recall,0.778)
  (Block Precision,0.182)
};
\addplot[fill=pcaaruntime, draw=pcaaruntime] coordinates {
  (Exact Accuracy,1.000)
  (Macro-F1,1.000)
  (Severe Recall,1.000)
  (Block Precision,1.000)
};
\legend{Static Rules,Scalar Heuristic,PCAA Runtime}
\end{axis}
\end{tikzpicture}
\caption{Aggregate route-quality comparison on the protected validation corpus.}
\end{figure}

\subsection{Review burden and proof closure}

High route quality alone would be insufficient. A governance layer is useful only if it allocates scarce review attention in a plausible way and still closes actions into replayable evidence. Table~\ref{tab:pcaa-public-burden-proof} therefore reports both review burden and proof-path behavior.

PCAA routes 29.2\% of traces into explicit review and 20.8\% into simulate-first handling, while keeping hard blocks to 25.0\%. This is a visibly different operating posture from the baselines, which depend more heavily on hard blocks and expose less structured review behavior. On the proof side, the manifest is stable and replay-readiness is complete for the evaluated traces. Receipt completeness is lower, at 0.516, because the corpus intentionally mixes strong inline and gateway controls with weaker observer or import-only runtimes. That lower value is not an accident; it reflects heterogeneous runtime depth and is one of the practical constraints the paper is trying to describe honestly.

\begin{table}[!htbp]
\centering
\caption{Review burden and proof-path behavior for the PCAA runtime.}
\label{tab:pcaa-public-burden-proof}
{\scriptsize
\begin{tabularx}{\linewidth}{@{}>{\raggedright\arraybackslash}p{0.40\linewidth}>{\centering\arraybackslash}p{0.18\linewidth}>{\centering\arraybackslash}p{0.18\linewidth}>{\centering\arraybackslash}p{0.18\linewidth}@{}}
\toprule
Metric family & Value 1 & Value 2 & Value 3 \\
\midrule
Review burden & review 0.292 & simulate 0.208 & block 0.250 \\
Proof and replay & manifest 1.000 & replay 1.000 & receipts 0.516 \\
Latency envelope & mean 0.35 ms & P95 0.54 ms & P99 0.72 ms \\
\bottomrule
\end{tabularx}
}
\end{table}

\subsection{Ablation}

The most useful evidence in favor of PCAA is not the headline score alone, but the pattern of degradation when key components are removed. Three ablations are especially important:
\begin{enumerate}
  \item removing externality context,
  \item removing approval-enforceability handling,
  \item removing the integrity lane.
\end{enumerate}

Removing externality context lowers exact routing from 1.000 to 0.875 and reduces explicit review routing for actions whose risk comes from boundary crossing rather than from raw action verbs alone. Removing approval-enforceability handling lowers exact routing to 0.917 and shifts workload toward a less disciplined review posture. Removing the integrity lane does not change route quality, but it collapses manifest stability to 0 and lowers proof closure to 0.708. The three ablations fail differently, which is precisely the point: routing, review, and proof are coupled, but they are not interchangeable.

\begin{table}[!htbp]
\centering
\caption{Ablation results on the protected validation corpus.}
{\scriptsize
\begin{tabularx}{\linewidth}{@{}>{\raggedright\arraybackslash}p{0.28\linewidth}>{\centering\arraybackslash}p{0.16\linewidth}>{\centering\arraybackslash}p{0.14\linewidth}>{\centering\arraybackslash}p{0.18\linewidth}>{\centering\arraybackslash}p{0.14\linewidth}@{}}
\toprule
Ablation & Exact accuracy & Review rate & Manifest stability & Proof closure \\
\midrule
Full PCAA runtime & \textbf{1.000} & 0.292 & \textbf{1.000} & \textbf{1.000} \\
No externality & 0.875 & 0.167 & 1.000 & 1.000 \\
No enforceability & 0.917 & 0.375 & 1.000 & 1.000 \\
No integrity lane & 1.000 & 0.292 & 0.000 & 0.708 \\
\bottomrule
\end{tabularx}
}
\end{table}

\begin{figure}[!htbp]
\centering
\begin{tikzpicture}
\begin{axis}[
  pcaa chart,
  title={PCAA Ablation Study},
  ylabel={Score},
  ymin=0,
  ymax=1.05,
  ybar,
  bar width=12pt,
  symbolic x coords={Full,No Externality,No Enforceability,No Integrity},
  xtick=data,
  xticklabel style={rotate=12, anchor=east},
  enlarge x limits=0.18,
  legend columns=3,
  legend style={at={(0.5,1.12)}, anchor=south},
]
\addplot[fill=pcaaruntime, draw=pcaaruntime] coordinates {
  (Full,1.000)
  (No Externality,0.875)
  (No Enforceability,0.917)
  (No Integrity,1.000)
};
\addplot[fill=pcaawarn, draw=pcaawarn] coordinates {
  (Full,0.292)
  (No Externality,0.167)
  (No Enforceability,0.375)
  (No Integrity,0.292)
};
\addplot[fill=pcaarisk, draw=pcaarisk] coordinates {
  (Full,1.000)
  (No Externality,1.000)
  (No Enforceability,1.000)
  (No Integrity,0.000)
};
\legend{Exact accuracy,Review rate,Manifest stability}
\end{axis}
\end{tikzpicture}
\caption{Ablation sensitivity of the public validation protocol.}
\end{figure}

\subsection{Disclosure boundary}

The disclosure boundary is asymmetric by design. Public readers can inspect the metric definitions, aggregate corpus composition, baseline comparisons, proof-path outcomes, and ablation sensitivity. They cannot inspect the raw scenario corpus, exact thresholds, exact heuristic weights, or connector-specific routing rules.

This choice narrows what the paper can claim. It is appropriate for a public research manuscript and for buyer-facing scrutiny, but it is not the same as releasing a full academic benchmark package. The evidentiary basis for the reported results is still concrete: the corpus is executable, the aggregate metrics are machine-derived, and the ablations identify which parts of PCAA are carrying the observed behavior. What remains withheld is implementation-sensitive detail rather than the existence of a runnable evaluation.

\section{Discussion}

\subsection{Where the novelty claim actually lies}

The most defensible novelty claim in PCAA is architectural. The paper does not introduce model evaluation, runtime tracing, programmable rails, or cryptographic attestation in isolation. Its contribution is to recombine these adjacent capabilities around a different primary object: an action certificate created at decision time and closed only after review and outcome have been recorded.

That shift has several consequences. First, heterogeneous runtime normalization becomes part of the core problem rather than an integration detail. Second, selective review semantics become part of the control method itself rather than a workflow added later. Third, externality is treated as part of admissibility, not merely as post-hoc metadata. Fourth, replay and proof are attached to the same object that carried the original decision. Finally, integrity references can be layered onto the certificate path without replacing certificate closure as the tenant-visible authority record.

\subsection{Relation to adjacent systems}

This framing helps explain what PCAA is, but also what it is not. It is not merely a policy table: a rule engine may emit allow or deny, yet still lose the review path, accountability chain, and replay basis that make a decision meaningful under later scrutiny. It is also not simply an LLM judge. Model-based assessment may contribute to routing, but the certificate path is designed to remain inspectable and replayable even when the underlying model changes or when a deployment avoids heavy inference in the control loop.

The same distinction appears in runtime governance. Enforcement close to execution is often necessary, and systems such as AGT or Trust Boundary can provide valuable receipts and pre-tool controls. But runtime enforcement alone does not solve tenant-visible authority. If those receipts cannot be projected back into a certificate with explicit approval and closure semantics, the governance story fragments across tools and dashboards. PCAA is intended to hold those pieces together without pretending that every runtime exposes the same depth of control.

The enterprise-oriented refinements described in this version of the paper should be read in that light. Boundary facts such as destination visibility, personal-versus-company accounts, approved-client posture, and approval enforceability may sound operational rather than theoretical, but they sharpen the central research question. A runtime-neutral certificate is not useful in practice if it abstracts away exactly the boundary facts that determine whether an action is acceptable.

\subsection{Limits of the current evidence}

The paper also has clear limits. Runtime depth is heterogeneous, so not every deployment can claim the same pre-execution control. The empirical section is based on an executable but protected benchmark corpus rather than on a fully public benchmark release. Some adjacent product references will evolve over time, and the implementation-grounded vocabulary used here may shift faster than conventional academic terminology.

These constraints do not nullify the core claim, but they do narrow it. The paper should be read as an implementation-informed systems proposal with bounded public validation, not as a claim of universal third-party verification or invariant empirical performance across all future runtime revisions. The strongest reading of the evidence is therefore comparative: PCAA appears useful to the extent that route, review, and proof can remain attached to a common certificate path under runtime churn, and less useful to the extent that a deployment cannot expose meaningful receipts or enforce approvals before side effects.

\section{Conclusion}

This paper has argued for a simple but consequential shift in how agent governance is modeled. When high-value actions are routed through heterogeneous runtimes, the most durable trust object is neither the model report nor the hosted session record, but the action certificate that captures the decision, the review semantics, and the evidence needed for replay.

In the reference implementation studied here, that certificate path is made concrete through five checkpoints, a portable action envelope, a neutral runtime contract, and a proof model that keeps runtime receipts, approval receipts, and integrity projections on one path without denying heterogeneous runtime depth. The empirical section does not show that governance is solved. It does show that, on a protected but executable benchmark, the route-review-prove structure degrades in informative ways when externality, approval enforceability, or integrity are removed.

If agent runtimes continue to proliferate, the practical challenge is unlikely to be choosing one permanent execution substrate. It is more likely to be preserving a stable governance object while execution substrates keep changing. PCAA is offered as one approach to that problem.

\appendix
\section{Reference Implementation Notes}

This appendix summarizes the operational surfaces that matter most for interpreting the reference implementation discussed in the paper.

\begin{table}[h]
\centering
\small
\caption{Primary implementation surfaces relevant to the paper.}
\begin{tabularx}{\linewidth}{@{}>{\raggedright\arraybackslash}p{0.32\linewidth}X@{}}
\toprule
Implementation surface & Primary paper relevance \\
\midrule
Checkpoint lifecycle interface & Five checkpoints, admission semantics, and outcome closure \\
Portable action-envelope schema & Accountability, authorization, workflow, evaluation, and externality contexts \\
Runtime and approval receipt pipeline & Runtime-stage normalization and human-review evidence \\
Proof and replay bundle pipeline & Replay-oriented summaries, manifest digests, and verification projections \\
Authority and accountability projection & Final-authority designation and approver lineage \\
Neutral runtime contract surface & Runtime families, adapter modes, and authority split \\
External-governor compatibility layer & Projection of runtime-adjacent enforcement signals such as AGT stages \\
Assurance and export surfaces & Buyer-visible governance framing and deployment-facing summaries \\
\bottomrule
\end{tabularx}
\end{table}

\section{Evaluation Provenance}

The quantitative results in Section~9 combine two artifact classes:
\begin{enumerate}
  \item a protected executable benchmark corpus maintained with the reference implementation;
  \item a disclosure-safe public evaluation surface that reports aggregate metrics, ablations, and proof-path outcomes without exposing implementation-sensitive scenario details.
\end{enumerate}

The public manuscript intentionally withholds the raw scenario corpus, exact policy thresholds, exact scalar-heuristic weights, and connector-specific decision rules. Instead, it reports protected-corpus composition at the aggregate level, together with route-quality metrics, review-burden metrics, proof and replay metrics, and ablation sensitivity.

This separation is intentional. The goal of the public paper is to make the technical claim inspectable without turning the benchmark into a complete implementation playbook. Aggregate evidence remains reproducible inside the maintained reference implementation, while the public manuscript remains suitable for broad circulation, buyer review, and preprint distribution.

\bibliographystyle{plainnat}
\bibliography{references}

\end{document}